\documentclass[9pt,a4paper,twocolumn,twoside]{extarticle}
\usepackage[english]{babel}

\usepackage[T1]{fontenc}
\usepackage[notextcomp]{stix2}
\usepackage{FiraSans}
\usepackage{amsmath,amsthm}
\usepackage[
    left=1.25cm,
    right=1.25cm,
    top=2cm,
    bottom=2cm,
    headsep=0.75cm
]{geometry}
\usepackage{xcolor}
\definecolor{rhocolor}{rgb}{0.12,0.3,0.17}
\usepackage{microtype}
\usepackage{authblk}
\usepackage{enumitem}
\usepackage{fancyhdr}
\usepackage{lastpage}
\usepackage{lettrine}
\usepackage[explicit]{titlesec}
\usepackage{xurl}
\usepackage[colorlinks=true,allcolors=rhocolor]{hyperref}
\usepackage{cite}
\newcommand{\printbibliography}{%
  \bibliographystyle{IEEEtran}%
  \bibliography{references}%
}

\setlist{noitemsep,topsep=2pt}

\AtBeginDocument{
  \setlength{\abovedisplayskip}{10pt plus 2pt minus 2pt}
  \setlength{\belowdisplayskip}{10pt plus 2pt minus 2pt}
  \setlength{\abovedisplayshortskip}{8pt plus 2pt minus 2pt}
  \setlength{\belowdisplayshortskip}{8pt plus 2pt minus 2pt}
}

\makeatletter
\newcommand{\doctype}[1]{\def\@doctype{#1}}
\newcommand{\dates}[1]{\def\@dates{#1}}
\newcommand{\keywords}[1]{\def\@keywords{#1}}
\providecommand{\keywordname}{Keywords:}
\newcommand{\corres}[1]{\g@addto@macro\@corres{\par #1\par}}
\let\@corres\@empty

\newcommand{\absfont}{\normalfont\sffamily\fontsize{8.5}{11}\selectfont}
\newcommand{\absheadfont}{\color{rhocolor}\normalfont\sffamily\fontsize{9}{11}\selectfont\bfseries}
\newcommand{\keywordsfont}{\normalfont\sffamily\itshape\fontsize{7.8}{9}\selectfont}
\newcommand{\keywordheadfont}{\normalfont\sffamily\fontsize{7.8}{9}\selectfont\bfseries}
\newcommand{\theabstract}{}
\def\xabstract{abstract}
\long\def\abstract#1\end#2{%
    \def\two{#2}\ifx\two\xabstract
    \long\gdef\theabstract{\ignorespaces#1}%
    \def\go{\end{abstract}}%
\else
    #1\end{#2}%
    \let\go\relax\fi
\go}

\newcommand{\rhoabstract}{%
  \begingroup
  \setlength{\fboxsep}{6pt}%
  \noindent\colorbox{rhocolor!13}{%
    \parbox{\dimexpr\textwidth-2\fboxsep\relax}{%
      \absheadfont\abstractname\par\vspace{0.45em}%
      \absfont\theabstract\par
      \ifx\@keywords\undefined\else
        \vspace{0.7em}%
        {\keywordheadfont\keywordname\ }{\keywordsfont\@keywords}%
      \fi
    }%
  }%
  \endgroup
}

\newcommand{\titlefont}{\bfseries\color{rhocolor}\fontsize{18}{22}\sffamily\selectfont}
\renewcommand{\@maketitle}{%
  \vspace*{-1.5em}%
  {\raggedright\sffamily\bfseries\ifx\@doctype\undefined\else\@doctype\par\fi}%
  \vspace{0.35em}%
  {\raggedright\titlefont\@title\par}%
  \vspace{0.5em}%
  {\raggedright\normalfont\sffamily\@author\par}%
  \ifx\@dates\undefined\else
    \vspace{0.7em}{\raggedright\sffamily\fontsize{7}{8}\selectfont\@dates\par}%
  \fi
  \vspace{1.2em}%
  \rhoabstract
  \ifx\@corres\@empty\else
    \vspace{0.8em}{\footnotesize\sffamily\@corres}%
  \fi
  \vspace{1.3em}%
}

\newcommand{\footerfont}{\normalfont\sffamily\fontsize{8}{9}\selectfont}
\fancypagestyle{firststyle}{%
  \fancyhf{}%
  \fancyfoot[R]{\footerfont\textbf{\thepage}\textendash\pageref{LastPage}}%
}

\newcommand{\dropcapfont}{\color{rhocolor}\bfseries}
\newcommand{\rhostart}[1]{\lettrine[lines=2,lraise=0,findent=2pt,nindent=0em]{{\fontsize{27}{24}\dropcapfont #1}}{}}

\titleformat{\section}
    {\color{rhocolor}\sffamily\large\bfseries}
    {\thesection.}
    {0.5em}
    {\MakeUppercase{#1}}
    []
\titleformat{name=\section,numberless}[block]
    {\color{rhocolor}\sffamily\large\bfseries}
    {}
    {0em}
    {\rule{1.25ex}{1.25ex}\hspace{2.5pt}\MakeUppercase{#1}}
    []
\titleformat{\subsection}[block]
    {\sffamily\bfseries}
    {\thesubsection.}
    {0.5em}
    {#1}
    []
\titleformat{\subsubsection}[block]
    {\small\sffamily\bfseries\itshape}
    {\thesubsubsection.}
    {0.5em}
    {#1}
    []
\titlespacing*{\section}{0pc}{3ex plus 4pt minus 3pt}{5pt}
\makeatother

\newtheorem{theorem}{Theorem}[subsection]
\newtheorem{proposition}[theorem]{Proposition}

\theoremstyle{definition}
\newtheorem{definition}[theorem]{Definition}

\doctype{Preprint}
\title{A Probabilistic Interpretation of the Ball Mapper Graph}

\author[{\textasteriskcentered},a, b]{John Rick Manzanares}
\author[c]{Jay-Anne Bulauan}

\affil[a]{Dioscuri Centre in Topological Data Analysis, Institute of Mathematics of the Polish Academy of Sciences, 00-656 Warsaw, Poland}
\affil[b]{International Environmental Doctoral School, University of Silesia in Katowice, 41-205 Sosnowiec, Poland}
\affil[c]{Independent Researcher}

\dates{\today}

\corres{\textsuperscript{\textasteriskcentered}Corresponding author: \href{mailto:jdolormanzanares@impan.pl}{jdolormanzanares@impan.pl}}

\begin{abstract}
We introduce Probabilistic Ball Mapper, a formulation of Ball Mapper in which each data point is assigned a probability distribution supported only on the metric balls that contain it. This assignment defines both a partition subordinate to the Ball Mapper cover and a Markov kernel from the finite data space to the cover. We study two assignment schemes: a uniform-on-support rule and a localized radial-basis rule that incorporates distance to landmarks while preserving the underlying cover. Pushing the empirical data distribution through the kernel produces a probability distribution over vertices. Drawing twice, conditionally and independently, from each pointwise distribution produces a soft-overlap matrix. This matrix is symmetric, nonnegative, positive semidefinite, and has the vertex distribution as both marginals. It therefore provides a mass-normalized refinement of classical Ball Mapper overlap rather than another unnormalized edge count. For graphs constructed on a common cover, the vertex and overlap distributions can be compared directly. For independently fitted covers, we formulate Wasserstein and fused Gromov--Wasserstein-type discrepancies that account for vertex mass, landmark geometry when a common ambient metric is available, and intrinsic graph relations. For a fixed cover, we derive explicit perturbation bounds controlled by the sensitivity of the assignment rule, the magnitude of the data perturbation, and the data mass near cover boundaries. When the cover is recomputed, landmark motion creates an additional source of variation, for which we state a transport-based stability principle rather than an unconditional theorem. The resulting framework turns Ball Mapper into a probability-valued representation suitable for quantitative comparison while retaining its geometric interpretability and computational simplicity.
\end{abstract}

\keywords{Markov kernel; soft assignment; optimal transport; graph comparison; topological data analysis.}

\begin{document}

\maketitle
\thispagestyle{firststyle}

\section{Introduction}

This section introduces the problem addressed by Probabilistic Ball Mapper and situates the proposed framework within existing work on Ball Mapper and related Mapper constructions. We first motivate the use of probability-valued point-to-cover assignments, then discuss the closest related approaches and summarize the principal contributions of this work.

\subsection{Motivation}

\rhostart{T}opological data analysis (TDA) studies the shape of data through combinatorial and geometric summaries. A recurring theme in geometric data analysis is that high-dimensional observations may concentrate near lower-dimensional geometric structures, so that distances, neighborhoods, and connectivity can reveal organization that is not apparent from coordinatewise summaries alone \cite{tenenbaum2000global,roweis2000nonlinear}. Mapper \cite{singh2007mapper} constructs a simplicial summary of data by combining a filter function, a cover of the filter space, and clustering in the preimages of cover elements. Persistent homology provides another major class of TDA summaries by tracking the appearance and disappearance of topological features across a filtration \cite{edelsbrunner2002topological}. A key reason for its usefulness in data analysis is the existence of stability theorems, which show that small perturbations of suitable input functions lead to controlled changes in the resulting persistence diagrams \cite{cohensteiner2007stability, chazal2016persistence}.

Ball Mapper \cite{dlotko2019ballmapper} is a Mapper-inspired method that represents a finite metric dataset by covering it with balls centered at selected landmarks. The resulting graph is the $1$-skeleton of the nerve of this cover. Ball Mapper is well suited to exploratory data analysis because it uses the metric structure of the original data directly and requires only a scale parameter, the ball radius. Vertices represent metric balls, edges record non-empty overlaps, and scalar quantities may be visualized by coloring the vertices. 

The incidence relation underlying classical Ball Mapper is binary: a point either belongs to a covering ball or it does not. The original construction also allows edges to be weighted and describes a filtered nerve in which a simplex is weighted by the number of data points simultaneously covered by its vertices \cite{dlotko2019ballmapper}. These count-based weights quantify the size of an overlap, but they do not distinguish the relative proximity of a point to the landmarks of the balls that contain it. Thus, when several balls overlap, a point close to one landmark and near the boundary of another contributes equally to both incidences. Moreover, the resulting counts are not the marginals of a point-to-cover probability law, which makes it difficult to compare covers with different numbers of points or vertices in a coherent mass-preserving way. A small perturbation near a ball boundary may also create or destroy an incidence, and hence an edge, abruptly.

In this study, we formulate Ball Mapper probabilistically. Instead of assigning a point only by set membership, we assign to each point a probability measure on the covering balls. This produces a Markov kernel from the finite data space to the finite cover. We consider two choices: a uniform measure on the balls containing the point, and a localized radial-basis weighting that gives greater mass to balls whose landmarks are closer to the point. Both choices are supported only on the balls that contain the point, so the probabilistic construction remains subordinate to the original Ball Mapper cover. 

The probabilistic viewpoint gives more than a soft visualization. It induces a marginal probability distribution on the vertices of the Ball Mapper graph, which can be interpreted as a density-aware vertex weighting. It also defines a soft overlap matrix measuring how much probability mass is jointly assigned to pairs of balls. This matrix provides a quantitative analogue of adjacency that varies more continuously under perturbations than the binary edge set. 

We then use this structure to compare probabilistic Ball Mapper graphs. If two datasets are projected onto the same cover, their vertex weights and soft overlap matrices can be compared directly. More generally, two independently constructed Ball Mapper graphs may have different numbers of vertices. To handle this case, we introduce an optimal transport-based pseudometric. A transport operator maps the vertex distribution of one probabilistic Ball Mapper graph to that of another, while minimizing a cost based on landmark geometry and local graph descriptors. This connects probabilistic Ball Mapper to the broader optimal transport and Gromov--Wasserstein perspectives on comparing weighted metric and graph-like objects. 

Finally, we discuss stability. Away from cover boundaries, localized probabilistic memberships vary smoothly with the data. Consequently, vertex weights and soft overlaps change continuously under small perturbations. The main source of instability is the same one present in classical Ball Mapper: points near ball boundaries may change their support. The resulting stability statement therefore contains a perturbation term together with a boundary-mass term. This gives a natural explanation of when probabilistic Ball Mapper graph comparisons are expected to be stable and when abrupt changes may occur. 

\subsection{Related Work}

Several Mapper variants use graded or probabilistic membership, but they use it for purposes different from the construction developed here. F-Mapper applies fuzzy $C$-means to filter values to obtain irregular overlapping cover elements; a membership threshold determines the point-to-cover incidences from which the usual Mapper graph is built \cite{bui2020fmapper}. Soft Mapper instead models cover incidences by conditional Bernoulli variables and thereby defines a probability distribution over Mapper graphs, primarily to make filter optimization differentiable \cite{oulhaj2024differentiablemapper}. The Gaussian mixture formulation of Tao and Ge uses mixture responsibilities and a latent assignment matrix to construct implicit intervals, and summarizes the resulting graph distribution by a Mapper graph mode \cite{tao2025implicitmapper}. These approaches remain filter-based and use soft assignments to construct, randomize, or optimize the cover and its resulting combinatorial graph.

The present construction has a different starting point and output. We retain the metric ball cover of Ball Mapper and require the probability assignment to be subordinate to that fixed cover. The entire membership vector is retained as part of the representation, rather than thresholded into incidence. The result is a deterministic probability-enriched graph described by the point-to-cover kernel, its pushforward vertex distribution $\nu$, and its joint overlap distribution $Q$, rather than a distribution over combinatorial graphs. In the language of cover learning, the membership map $w\colon X\to\Delta^{K-1}$ is a partition of unity subordinate to the Ball Mapper cover; unlike methods that learn a cover such as ShapeDiscover \cite{scoccola2025coverlearning}, the emphasis of this study is on the probability objects induced by a prescribed metric-ball cover and on comparing those objects.

Two further uses of the word ``stability'' or ``probabilistic'' should be distinguished. GK-Mapper studies the response of fuzzy Mapper graphs to the fuzzifier parameter and identifies threshold events at which the graph changes \cite{sen2026gkmapper}. Our proven bounds instead concern perturbations of the data under a fixed Ball Mapper cover. Separately, probabilistic convergence of random Mapper graphs concerns recovery from random samples and convergence to a Reeb-type object \cite{brown2021randommapper}; it does not define a point-to-cover Markov kernel of the kind used here.

The main contributions of this work are as follows:
\begin{enumerate}
  \item We formulate point-to-cover assignment as a Markov kernel, or equivalently a subordinate partition of unity, and give both a uniform-on-support rule and localized radial-basis rules.
  \item We derive the induced vertex distribution $\nu$ and the soft overlap matrix $Q=W^{\mathsf T}D_{\mu}W$.  We show that $Q$ is a joint probability distribution with marginals $\nu$ and explain precisely how it differs from the overlap-count weights already available in the classical Ball Mapper.
  \item We give direct comparisons for graphs on a common ordered cover and transport-based discrepancies for independently fitted covers with different vertex sets. The latter combine vertex features with relational graph information without requiring a one-to-one vertex correspondence.
  \item We prove explicit perturbation bounds for $\nu$ and $Q$ under a fixed cover, with a separate term for data mass near cover boundaries. For recomputed covers, we isolate landmark motion as an additional source of variation and formulate an associated stability principle.
\end{enumerate}

\subsection{Organization of the Paper}

The paper is organized as follows. Section~2 recalls the Ball Mapper construction, defines subordinate probabilistic assignments, develops the Markov-kernel formulation, and derives the vertex distribution and soft overlap matrix.  A minimal worked example makes the normalization and marginal relations explicit. Section~3 develops direct common-cover comparisons and transport-based comparisons for independently fitted covers, and then proves the fixed-cover perturbation bounds while discussing the additional difficulty created by recomputing the landmarks.  Section~4 summarizes the scope, limitations, and directions for further work.

\section{Probabilistic Ball Mapper}

Ball Mapper turns a finite metric dataset into a graph by covering the dataset with metric balls and recording how these balls overlap. Each vertex of the graph represents one covering ball, and an edge indicates that two balls share at least one data point. Thus, the graph gives a coarse, scale-dependent summary of the shape of the data.

In the classical construction, point-to-ball membership is binary: a data point either lies in a ball or it does not. In this section, we first recall the classical Ball Mapper construction. We then replace binary membership by probability measures on the cover. This produces a soft assignment of points to covering balls, equivalently a Markov kernel from the dataset to the set of balls, and leads naturally to density-aware weights on the Ball Mapper graph.

\subsection{Metric Structure}

Let $(X,d)$ be a finite metric space. We think of $X$ as a finite dataset and of $d$ as the rule that determines which data points are close. The function $d : X \times X \to \mathbb{R}$ satisfies, for all $x,y,z \in X$,
\begin{itemize}
\item $d(x,y) \ge 0$,
\item $d(x,y) = 0$ if and only if $x = y$,
\item $d(x,y) = d(y,x)$, and
\item $d(x,z) \le d(x,y) + d(y,z)$.
\end{itemize}
The metric is therefore the geometric input of Ball Mapper: it determines the balls used to cover the dataset.

Fix a scale parameter $\varepsilon>0$. This parameter controls the resolution of the construction. Small values of $\varepsilon$ produce a finer cover, while larger values produce a coarser one. To build the cover, we first choose a set of representative data points.

\begin{definition}
Let $(X,d)$ be a metric space and let $\varepsilon > 0$. A subset $N \subset X$ is an \emph{$\varepsilon$-net} if
\begin{itemize}
  \item for every $x \in X$, there exists $n \in N$ such that $d(x,n) < \varepsilon$; and
  \item for any distinct $n_1,n_2 \in N$, $d(n_1,n_2) \ge \varepsilon$.
\end{itemize}
The elements of $N$ are called \emph{landmarks}.
\end{definition}

Let $N_\varepsilon=\{c_1,\dots,c_K\}$ be an $\varepsilon$-net in $X$. Each landmark $c_i$ determines the metric ball
\[
C_i = B(c_i,\varepsilon) = \{y\in X \mid d(c_i,y)<\varepsilon\}.
\]
The collection $\mathcal C_\varepsilon(X)=\{C_1,\dots,C_K\}$ is a cover of $X$. A \emph{cover} of a metric space $X$ is a collection $\{U_\alpha\}_{\alpha \in I}$ of subsets of $X$ such that
\[
\bigcup_{\alpha \in I}U_\alpha = X.
\]
By the defining property of an $\varepsilon$-net, every point of $X$ lies within distance $\varepsilon$ of at least one landmark.

The combinatorial structure of the cover is encoded by its nerve \cite{hatcher2002algebraic}. Recall that an \emph{abstract simplicial complex} consists of a vertex set $V$ together with a collection $S$ of non-empty finite subsets of $V$ such that
  \begin{itemize}
  \item every singleton $\{v\}$ belongs to $S$, and
  \item whenever $\sigma \in S$ and $\tau \subset \sigma$ is non-empty, then $\tau \in S$.
  \end{itemize}
An element $\sigma \in S$ is a $k$-simplex when $k = |\sigma|-1$. The \emph{$k$-skeleton} of a simplicial complex is the subcomplex consisting of all simplices of dimension at most $k$.

\begin{definition}
Let $U=\{U_i\}_{i\in I}$ be a finite cover. The \emph{nerve} of $U$ is the abstract simplicial complex $\mathcal N(U)$ with vertex set $I$ such that a finite subset $\{i_0,\dots,i_m\} \subset I$ forms an $m$-simplex whenever
\[
\bigcap_{j=0}^m U_{i_j} \neq \emptyset.
\]
\end{definition}

The \emph{Ball Mapper graph} associated with the cover $\mathcal C_\varepsilon(X)$ is the $1$-skeleton of the nerve $\mathcal N(\mathcal C_\varepsilon(X))$. Equivalently, it is the graph with one vertex for each ball $C_i$ and an edge between two distinct vertices $C_i$ and $C_j$ precisely when
\[
C_i\cap C_j\neq \emptyset.
\]
Thus, two covering balls are adjacent exactly when they share at least one data point.

The classical Ball Mapper graph records overlap using only binary membership information: a point either contributes to a ball or it does not. The probabilistic construction introduced next keeps the same cover and the same underlying metric scale, but replaces this hard membership relation by a probability distribution over the balls associated with each data point.

\subsection{Probabilistic Modelling}

We now replace the hard membership relation between data points and covering balls by a probabilistic one. The guiding idea is that, in classical Ball Mapper, if a point $x$ lies in several covering balls, then it belongs to all of them equally in a binary sense. In the probabilistic version, we instead allow $x$ to distribute one unit of mass among the balls that contain it.

Since the cover $\mathcal C_\varepsilon(X)$ contains only finitely many balls, a probability measure on the cover is simply a way of assigning nonnegative weights to these balls so that the total weight is one. We write $2^A$ for the collection of all subsets of a finite set $A$. In this finite setting, there is no measurability issue as every subset of $A$ is allowed to have a probability.

\begin{definition}
Let $A$ be a finite set. A \emph{probability measure} on $A$ is a function $\mathbb P:2^A\to[0,1]$ such that $\mathbb P(A)=1$ and, for any pairwise disjoint subsets $E_1,\dots,E_m\subseteq A$,
\[
\mathbb P\left(\bigcup_{\ell=1}^m E_\ell\right) = \sum_{\ell=1}^m \mathbb P(E_\ell).
\]
\end{definition}

Equivalently, a probability measure on $A=\{a_1,\dots,a_m\}$ is determined by nonnegative numbers $p_1,\dots,p_m\ge 0$ satisfying
\[
\sum_{\ell=1}^m p_\ell=1,
\]
where $p_\ell=\mathbb P({a_\ell})$.

For $K\ge 1$, the \emph{probability simplex} on $K$ elements is
\[
\Delta^{K-1} =  \left\{p=(p_1,\dots,p_K)\in\mathbb R^K \mid p_i\ge 0 \text{ and } \sum_{i=1}^K p_i=1 \right\}.
\]
A probability measure on $\mathcal C_\varepsilon(X)=\{C_1,\dots,C_K\}$ can be identified with a vector in $\Delta^{K-1}$.

Let $\mathcal C_\varepsilon(X) = \{C_1,\dots,C_K\}$ be the Ball Mapper cover, where $C_i=B(c_i,\varepsilon)$. For each $x\in X$, we define the \emph{index set}
\[
I_x= \{i \in \{1,\dots,K\} \mid x\in C_i\}.
\]
The function $I_x$ records the covering balls that contain $x$. Since $\mathcal C_\varepsilon(X)$ covers $X$, we have $I_x \neq \emptyset$ for every $x \in X$.

A probabilistic Ball Mapper assignment associates to each data point $x\in X$ a probability measure $\kappa_x$ on the finite set of covering balls $\mathcal C_\varepsilon(X)$. The value
\[
\kappa_x({C_i})
\]
is interpreted as the amount of mass that $x$ assigns to the ball $C_i$.

We require this assignment to be \emph{subordinate} to the Ball Mapper cover. That is, a point may assign positive mass only to balls that contain it. In other words,
\[
\kappa_x({C_i})=0
\]
whenever $x \notin C_i$. Equivalently, the support 
\[
\{C_i\in\mathcal C_\varepsilon(X) \mid \kappa_x({C_i})>0\}
\]
of $\kappa_x$ is contained in $\{C_i \mid i\in I_x\}$.

A convenient way to construct such measures is by assigning nonnegative \emph{scores} (or real numbers) to the balls and then normalizing them. For each $x\in X$, let
\[
a_i(x) \ge 0, \qquad i=1,\dots,K,
\]
be scores satisfying $a_i(x)=0$ if $i \notin I_x$, and
\[
\sum_{j=1}^K a_j(x) > 0.
\]
The second condition holds, for example, whenever at least one ball containing $x$ receives positive score. We then define, for every subset $A\subseteq \mathcal C_\varepsilon(X)$,
\[
\kappa_x(A) = \frac{\sum_{\{i \mid C_i\in A\}} a_i(x)}{\sum_{j=1}^K a_j(x)}.
\]
Equivalently, on singletons,
\[
\kappa_x({C_i}) = \frac{a_i(x)}{\sum_{j=1}^K a_j(x)}.
\]
Because the scores vanish outside $I_x$, this probability measure is automatically subordinate to the cover.

\begin{proposition}
For every $x\in X$, the function $\kappa_x$ defined above is a probability measure on $\mathcal C_\varepsilon(X)$.
\end{proposition}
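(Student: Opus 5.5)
We check the three requirements of the definition of a probability measure on the finite set $\mathcal C_\varepsilon(X)$ directly from the formula for $\kappa_x$. Write $S(x)=\sum_{j=1}^K a_j(x)$; by hypothesis $S(x)>0$, so the quotient defining $\kappa_x(A)$ is well defined for every subset $A\subseteq\mathcal C_\varepsilon(X)$.

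\emph{Step 1 (values in $[0,1]$).} For any $A$, the numerator $\sum_{\{i\mid C_i\in A\}}a_i(x)$ is a sum of nonnegative scores, hence nonnegative. It is a subsum of $S(x)$, and the omitted terms are also nonnegative, so it is at most $S(x)$. Dividing by $S(x)>0$ gives $0\le\kappa_x(A)\le 1$. In particular $\kappa_x$ maps $2^{\mathcal C_\varepsilon(X)}$ into $[0,1]$.

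\emph{Step 2 (normalization).} Taking $A=\mathcal C_\varepsilon(X)$, the index set $\{i\mid C_i\in A\}$ is all of $\{1,\dots,K\}$. Hence the numerator equals $S(x)$ and $\kappa_x(\mathcal C_\varepsilon(X))=1$.

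\emph{Step 3 (finite additivity).} Let $E_1,\dots,E_m$ be pairwise disjoint subsets of $\mathcal C_\varepsilon(X)$. The index sets $J_\ell=\{i\mid C_i\in E_\ell\}$ are then pairwise disjoint, and their union is the index set of $\bigcup_\ell E_\ell$. The only point needing care is that distinct indices give distinct balls; the paper lists the cover as $K$ balls indexed by landmarks, and we treat $\mathcal C_\varepsilon(X)$ as this indexed family. A finite sum over a disjoint union splits as the sum of the sums over the pieces, so
\[
\sum_{i\in\bigcup_\ell J_\ell}a_i(x)=\sum_{\ell=1}^m\sum_{i\in J_\ell}a_i(x).
\]
Dividing by $S(x)$ gives $\kappa_x\bigl(\bigcup_\ell E_\ell\bigr)=\sum_\ell\kappa_x(E_\ell)$.

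The main obstacle is mild: it is making precise that subsets of $\mathcal C_\varepsilon(X)$ correspond bijectively to subsets of $\{1,\dots,K\}$, so that disjointness of the $E_\ell$ transfers to disjointness of the $J_\ell$. Because the $c_i$ are distinct landmarks, we identify $C_i$ with its index $i$, and the argument above goes through.

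Finally, we note subordination, although it is not part of the claim. If $x\notin C_i$ then $i\notin I_x$, so $a_i(x)=0$ and hence $\kappa_x(\{C_i\})=0$.
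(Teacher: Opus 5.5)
Your proof is correct and follows the paper's argument essentially step for step: you bound the values using nonnegativity of the scores, check normalization on the whole cover, and obtain finite additivity by splitting the score sum over disjoint index sets before dividing by the normalizer. The identification of balls with indices that you flag as the main obstacle is not actually needed, because $J_\ell=\{i\mid C_i\in E_\ell\}$ is a preimage under $i\mapsto C_i$, and preimages preserve disjointness and unions whether or not this map is injective.
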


\begin{proof}
Fix $x\in X$ and write
\[
Z_x=\sum_{j=1}^K a_j(x).
\]
By assumption, $Z_x > 0$. Since each $a_i(x)$ is nonnegative, we have
\[
0\le \kappa_x(A)\le 1
\]
for every $A\subseteq\mathcal C_\varepsilon(X)$. Also, $\kappa_x(\emptyset)=0$ and
\[
\kappa_x(\mathcal C_\varepsilon(X)) = \frac{\sum_{i=1}^K a_i(x)}{Z_x} = 1.
\]

Now let $A_1,\dots,A_m\subseteq \mathcal C_\varepsilon(X)$ be pairwise disjoint. Then
\[
\sum_{\left\{i \ \mid \ C_i\in \bigcup_{\ell=1}^m A_\ell\right\}} a_i(x) =\sum_{\ell=1}^m \sum_{\{i \ \mid \ C_i\in A_\ell\}} a_i(x)
\]
because no ball belongs to more than one of the sets $A_\ell$. Dividing by $Z_x$ gives
\[
\kappa_x\left(\bigcup_{\ell=1}^m A_\ell\right) = \sum_{\ell=1}^m \kappa_x(A_\ell).
\]
Thus, $\kappa_x$ is a probability measure on $\mathcal C_\varepsilon(X)$.
\end{proof}

This general construction contains several useful choices. First, the uniform-on-support, or normalized-incidence, rule is obtained by taking
\[
a_i(x)=
\begin{cases}
1, & i\in I_x,\\
0, & i\notin I_x.
\end{cases}
\]
Then
\[
\kappa_x({C_i}) =
\begin{cases}
\dfrac{1}{|I_x|}, & i\in I_x,\\[0.5em] 
0, & i\notin I_x.
\end{cases}
\]
Here, $x$ distributes its mass equally among all covering balls that contain it. This choice is purely combinatorial as it depends only on the incidence relation $x \in C_i$.

More generally, we may prescribe any discrete probability distribution on the balls containing $x$. That is, for each $x\in X$, choose numbers $q_i(x) \ge 0$ where $i \in I_x$ satisfying
\[
\sum_{i\in I_x}q_i(x)=1.
\]
We then define
\[
\kappa_x({C_i}) =
\begin{cases}
q_i(x), & i\in I_x,\\
0, & i\notin I_x.
\end{cases}
\]
This allows the model to include prior information, local density information, confidence scores, or application-specific preferences among the balls that contain $x$.

A geometrically natural choice is obtained from a radial profile \cite{buhmann2003radial}. Let
\[
\rho:[0,\infty)\to[0,\infty)
\]
be a nonnegative function such that $\rho(r) > 0$ for $0 \le r <\varepsilon$. We assign scores
\[
a_i(x) =
\begin{cases}
\rho(d(x,c_i)), & i\in I_x,\\
0, & i\notin I_x.
\end{cases}
\]
Then
\[
\kappa_x({C_i}) =
\begin{cases}
\dfrac{\rho(d(x,c_i))}{\sum_{j\in I_x}\rho(d(x,c_j))}, & i\in I_x,\\[1.2em]
0, & i\notin I_x.
\end{cases}
\]
This produces a localized radial basis assignment. It means that the mass of $x$ is restricted to balls that contain $x$, but among those balls the mass can depend on the distance from $x$ to the corresponding landmarks.

For example, the Gaussian radial profile is
\[
\rho(r)=\exp(-\gamma r^2), \qquad \gamma>0.
\]
In Euclidean settings, we often write
\[
\gamma=\frac{1}{2\sigma^2}, \qquad \sigma>0,
\]
where $\sigma$ controls the radial scale. Other choices are also possible, such as
\[
\rho(r)=1,
\]
which recovers the uniform assignment, or
\[
\rho(r)=\frac{1}{r+\alpha},
\qquad \alpha>0,
\]
which favors nearer landmarks more strongly, or compactly supported profiles that vanish beyond a prescribed distance.

We may also combine a radial profile with ball-dependent prior weights. If $\pi_i>0$ is a prior weight assigned to the ball $C_i$, define
\[
a_i(x) =
\begin{cases}
\pi_i,\rho(d(x,c_i)), & i\in I_x,\\
0, & i\notin I_x.
\end{cases}
\]
Then $\pi_i$ can encode information such as the size of the ball, local sampling density, or an externally chosen importance weight.

In what follows, we write
\[
w(x) = (w_1(x),\dots,w_K(x)) \in \Delta^{K-1}
\]
for the membership vector associated with $x$, where $w_i(x)=\kappa_x({C_i})$. Thus, $w_i(x)\ge 0$,
\[
\sum_{i=1}^K w_i(x)=1,
\]
and
$w_i(x)=0$ whenever $x\notin C_i$. The last condition is the subordination condition. The probabilistic assignment respects the original Ball Mapper cover. Hence, the construction keeps the same cover as classical Ball Mapper, but replaces hard membership by a flexible probability distribution over the balls containing each point.

\subsection{Markov Kernel}

The family of probability measures $\{\kappa_x\}_{x\in X}$ can be viewed as a probabilistic map from data points to covering balls. Instead of assigning a point $x$ to a single ball, or to all balls containing it in a purely binary way, we let $x$ determine a probability distribution on the cover. This is formalized by the notion of a Markov kernel \cite{Kallenberg2021}.

\begin{definition}
Let $A$ and $B$ be finite sets. A \emph{Markov kernel} from $A$ to $B$ is a function
\[
K:A\times 2^B\to[0,1]
\]
such that, for each fixed $a\in A$, the map
\[
E\mapsto K(a,E)
\]
is a probability measure on $B$.
\end{definition}

In the present setting, we define
\[
\kappa:X\times 2^{\mathcal C_\varepsilon(X)}\to[0,1]
\]
by $\kappa(x,A)=\kappa_x(A)$ where $A\subseteq \mathcal C_\varepsilon(X)$.

\begin{proposition}
The map $\kappa$ is a Markov kernel from $X$ to $\mathcal C_\varepsilon(X)$.
\end{proposition}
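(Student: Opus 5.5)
This is essentially a matter of unwinding the definition of a Markov kernel and invoking the previous proposition. The plan is to check the two requirements in the definition one after the other, with $A = X$ and $B = \mathcal C_\varepsilon(X)$.

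\textbf{Step 1 (codomain).} First I will verify that $\kappa$ is a well-defined function $X\times 2^{\mathcal C_\varepsilon(X)}\to[0,1]$. For each pair $(x,A)$ the value $\kappa(x,A)=\kappa_x(A)$ is given by the explicit normalized-score formula. Its denominator $Z_x=\sum_j a_j(x)$ is strictly positive by assumption, so the quotient is defined. The proof of the preceding proposition already shows $0\le\kappa_x(A)\le 1$, because the numerator is a sub-sum of nonnegative terms of $Z_x$.

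\textbf{Step 2 (kernel property).} Next I fix $x\in X$ and observe that the section $A\mapsto\kappa(x,A)$ is, by definition, exactly $\kappa_x$. By the preceding proposition, $\kappa_x$ is a probability measure on $\mathcal C_\varepsilon(X)$: it has total mass one and is finitely additive over disjoint subsets. Since $x$ was arbitrary, every section is a probability measure. Together with Step 1, this is precisely the definition of a Markov kernel from $X$ to $\mathcal C_\varepsilon(X)$.

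\textbf{Remarks.} I would also note that the argument applies verbatim to the general prescription via the numbers $q_i(x)$. There one takes $a_i(x)=q_i(x)$, so $Z_x=1$.

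I do not expect a genuine obstacle. The only point to state explicitly is that both sets are finite, so the paper's definition of a Markov kernel involves no measurability condition in the variable $x$. Hence no additional regularity needs to be checked beyond the pointwise statement already proved.
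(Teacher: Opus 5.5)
Your proposal is correct and is essentially the paper's argument: the paper also just observes that each section $A\mapsto\kappa(x,A)$ is exactly the probability measure $\kappa_x$ from the preceding proposition. Your Step 1 is harmless but redundant, since a probability measure already takes values in $[0,1]$ by definition.
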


\begin{proof}
For each fixed $x\in X$, the map $A\mapsto \kappa(x,A)$ is exactly the probability measure $\kappa_x$ on $\mathcal C_\varepsilon(X)$. Hence, $\kappa$ is a Markov kernel from $X$ to $\mathcal C_\varepsilon(X)$.
\end{proof}

The probabilistic Ball Mapper assignment can be interpreted as follows. Given a data point $x$, the kernel $\kappa$ describes how to randomly choose a covering ball associated with $x$. The corresponding membership vector is $w(x)$. Equivalently, after choosing an ordering of the data points, the kernel may be represented by the membership matrix
\[
W=(W_{xi})_{x\in X,\ 1\le i\le K},
\]
and $W_{xi}=w_i(x)$. Each row of $W$ is a probability vector. Moreover, because the assignment is subordinate to the Ball Mapper cover, $w_i(x) = 0$ whenever $x\notin C_i$.

Now, let $\mu$ be a probability measure on $X$. In data-driven applications, $\mu$ is often the empirical measure
\[
\mu({x})=\frac{1}{|X|}, \qquad x\in X,
\]
but other choices are possible when the data points have prescribed weights.

The Markov kernel $\kappa$ transports the data distribution $\mu$ to a probability distribution on the covering balls. Let $\nu:2^{\mathcal C_\varepsilon(X)}\to[0,1]$
be a function defined by
\[
\nu(A) = \sum_{x\in X}\kappa(x,A)\mu({x}), \qquad A\subseteq \mathcal C_\varepsilon(X).
\]
The function $\nu$ is the distribution obtained by first sampling a data point $x$ according to $\mu$, and then sampling a covering ball according to $\kappa_x$.

The weight of the vertex corresponding to $C_i$ is
\[
\nu_i := \nu({C_i}) =\sum_{x\in X}\kappa(x,{C_i})\mu({x}) = \sum_{x\in X}w_i(x)\mu({x}).
\]
Here, $\nu_i$ is the average amount of membership mass assigned to the ball $C_i$.

For the uniform membership rule,
\[
\nu_i = \sum_{x\in C_i}\frac{\mu({x})}{|I_x|}.
\]
For a localized radial-basis membership rule with kernel weights $k(x,c_i)$,
\[
\nu_i = \sum_{x\in C_i}\frac{k(x,c_i)}{\sum_{j\in I_x}k(x,c_j)}\mu({x}).
\]
The probabilistic Ball Mapper construction keeps the same cover and the same underlying nerve graph as classical Ball Mapper, but enriches the vertices with probability weights. These weights combine two pieces of information: the data distribution $\mu$ on $X$ and the chosen soft membership rule $\kappa$.

The vertex weights describe how much probability mass is assigned to individual covering balls. To measure how probability mass is shared between pairs of balls, we introduce a soft overlap matrix.

\begin{definition}
Let $w_i(x)=\kappa_x({C_i})$ be the probabilistic membership of $x$ in $C_i$. The \emph{soft overlap matrix} associated with the probabilistic Ball Mapper cover is the matrix
$Q=(Q_{ij})_{i,j=1}^K$ defined by
\[
Q_{ij} = \sum_{x\in X} w_i(x)w_j(x)\mu({x}).
\]
\end{definition}

The entry $Q_{ij}$ has a natural probabilistic interpretation. Sample a point $x$ from $X$ according to $\mu$. Then, conditional on $x$, independently sample two covering balls according to the same distribution $\kappa_x$. The quantity $Q_{ij}$ is the probability that the first sampled ball is $C_i$ and the second sampled ball is $C_j$.

Equivalently, if $D_\mu$ denotes the diagonal matrix with entries $\mu(\{x\})$, then
\[
Q=W^\top D_\mu W.
\]
Hence, $Q$ is symmetric, nonnegative, and positive semidefinite. Moreover,
\[
\sum_{i=1}^K\sum_{j=1}^K Q_{ij} = \sum_{x\in X}\left(\sum_{i=1}^K w_i(x)\right)\left(\sum_{j=1}^K w_j(x)\right)\mu({x}) = 1.
\]
Thus, $Q$ is a probability distribution on ordered pairs of covering balls.

The marginals of this distribution are exactly the vertex weights. Indeed,
\[
\sum_{j=1}^K Q_{ij} =\sum_{j=1}^K\sum_{x\in X} w_i(x)w_j(x)\mu({x}) = \sum_{x\in X}w_i(x)\left(\sum_{j=1}^K w_j(x)\right)\mu({x}) = \nu_i.
\]
Similarly,
\[
\sum_{i=1}^K Q_{ij}=\nu_j.
\]
Therefore, $Q$ refines the vertex distribution $\nu$ by recording how vertex mass is coupled through shared data points.

The diagonal entry
\[
Q_{ii} = \sum_{x\in X}w_i(x)^2\mu({x})
\]
measures the amount of mass that is repeatedly assigned to the same ball $C_i$ when two balls are sampled from the same data point. Large diagonal values indicate concentrated membership, while smaller diagonal values indicate that mass is more spread across several balls.

The probabilistic overlap should be distinguished from the overlap-count weight already present in the original Ball Mapper formulation \cite{dlotko2019ballmapper}. For two distinct balls, the classical count is
\[
  m_{ij} =\sum_{x\in X}\mathbf{1}_{\{x\in C_i\}}\mathbf{1}_{\{x\in C_j\}},
\]
whereas the probabilistic overlap is
\[
  Q_{ij} = \sum_{x\in X}w_i(x)w_j(x)\mu(x).
\]
A point contained in several balls contributes one unit to every relevant classical incidence and overlap count. In the probabilistic construction, the same point first distributes one unit of mass among its admissible balls; its contribution to $Q$ then distributes that mass over ordered pairs of balls. Consequently,
\[
  \sum_{i=1}^{K}\nu_i=1,
  \qquad
  \sum_{i=1}^{K}\sum_{j=1}^{K}Q_{ij}=1,
  \qquad\text{and}\qquad
  \sum_{j=1}^{K}Q_{ij}=\nu_i.
\]
Thus, the novelty is not the mere availability of weighted vertices or edges. It is the mass-conserving point-to-cover normalization and the coherent marginal--joint interpretation of $(\nu,Q)$. The off-diagonal entries of $Q$ may still be used as edge weights whenever a weighted visualization is desired, while the binary nerve records whether the corresponding overlap exists.

\subsection{A Minimal Worked Example}

Let $X=\{x_1,x_2,x_3\}$ carry the empirical distribution $\mu(x_\ell)=1/3$.  Consider two covering balls satisfying
\[
  C_1\cap X=\{x_1,x_2\},
  \qquad\text{and}\qquad
  C_2\cap X=\{x_2,x_3\}.
\]
Under the uniform-on-support rule, the membership matrix is
\[
  W=
  \begin{pmatrix}
    1 & 0\\
    \tfrac12 & \tfrac12\\
    0 & 1
  \end{pmatrix},
  \qquad
  D_\mu=\tfrac13 I_3.
\]
The induced vertex distribution is
\[
  \nu=W^{\mathsf T}
  \begin{pmatrix}
    1/3\\[1mm]1/3\\[1mm]1/3
  \end{pmatrix}
  =
  \begin{pmatrix}
    1/2\\[1mm]1/2
  \end{pmatrix},
\]
and the soft overlap matrix is
\[
  Q=W^{\mathsf T}D_\mu W
  =
  \begin{pmatrix}
    5/12 & 1/12\\[1mm]
    1/12 & 5/12
  \end{pmatrix}.
\]
The entries of $Q$ sum to one, and each row and column sums to the corresponding component of $\nu$. The classical Ball Mapper edge between $C_1$ and $C_2$ is supported by the single point $x_2$, so its classical overlap count is $m_{12}=1$.  In contrast, $Q_{12}=1/12$: the point $x_2$ has first divided its unit mass equally between the two balls, and the product $w_1(x_2)w_2(x_2)=1/4$ is then weighted by $\mu(x_2)=1/3$. This elementary example shows why $Q$ is a joint probability distribution rather than a rescaled copy of the classical overlap count.

\section{Transport-Based Comparison}

The preceding section associates to a Ball Mapper cover three probabilistic objects: the membership vector $w(x)$ of each data point, the induced vertex distribution $\nu$, and the soft overlap matrix $Q$. We now use these objects to compare probabilistic Ball Mapper graphs.

The main difficulty is that two independently constructed Ball Mapper covers need not have the same number of balls (vertices). Even when both graphs summarize similar data, their vertex sets may have different cardinalities and there is generally no canonical vertex-by-vertex correspondence. Therefore, an entrywise comparison of adjacency matrices or overlap matrices is meaningful only in the special case where the two graphs are built on the same ordered cover.

Optimal transport \cite{villani2009optimal} provides a natural way around this problem. Instead of requiring a one-to-one matching of vertices, we regard the vertices as carrying probability mass and compare graphs by transporting mass from the vertices of one graph to the vertices of the other. This allows one vertex to split its mass across several vertices, and it also allows several vertices to merge into one.

\subsection{Vertex Measures}

Let $X$ be a finite metric dataset with Ball Mapper cover $\mathcal C_\varepsilon(X) = \{C_1,\dots,C_K\}$. We can write a \emph{probabilistic Ball Mapper graph} as a tuple
\[
\mathsf G_X=(V_X,E_X,c_X,\nu_X,Q_X),
\]
where
\begin{itemize}
\item $V_X = \{1,\dots,K\}$ is the vertex set,
\item $E_X$ is the Ball Mapper edge set,
\item $c_X(i)=c_i$ is the landmark associated with the vertex $i$,
\item $\nu_X\in\Delta^{K-1}$ is the vertex distribution induced by the Markov kernel $\kappa$ and the data distribution $\mu_X$, and
\item $Q_X\in\mathbb R^{K\times K}$ is the soft overlap matrix.
\end{itemize}

We first compare only the vertex distributions and vertex features. Let $\mathsf G_X$ and $\mathsf G_Y$ be two probabilistic Ball Mapper graphs with vertex sets $V_X$ and $V_Y$. We assign to each vertex a feature vector. For example, for $i\in V_X$, define
\[
\phi_X(i) =\bigl(c_X(i),\overline d_X(i),r_X(i)\bigr),
\]
where $c_X(i)$ is the landmark, $\overline d_X(i)$ is the normalized degree, and $r_X(i)$ is the normalized soft-overlap spread.

The \emph{normalized degree} is
\[
\overline d_X(i) =
\begin{cases}
\dfrac{\deg_X(i)}{|V_X|-1}, & |V_X|>1,\\[0.5em]
0, & |V_X|=1,
\end{cases}
\]
where $\deg_X(i) = \left|\{k\in V_X \mid (i,k)\in E_X\}\right|$. The \emph{normalized soft overlap spread} is
\[
r_X(i) =
\begin{cases}
1-\dfrac{Q_X(i,i)}{\nu_X(i)}, & \nu_X(i)>0,\\[0.3em]
0, & \nu_X(i)=0.
\end{cases}
\]
The quantity $r_X(i)$ measures the fraction of the mass associated with vertex $i$ that is shared with other vertices. A value near zero indicates that the mass assigned to $i$ is concentrated on $i$ itself, while a larger value indicates stronger overlap with other cover elements.

If landmark coordinates are included in the feature vectors, the two datasets must be embedded in a common metric or feature space so that $d(c_X(i),c_Y(j))$ is defined.  When no such common ambient geometry is available, the landmark term should be omitted and the comparison should use common intrinsic descriptors or the relational formulation below. 

Assume that vertex features lie in a common feature space $\Phi$ equipped with a ground distance $d_\Phi$. For example, we may take $d_\Phi(\phi_X(i),\phi_Y(j))^p$ as
\[
\lambda_{\mathrm{geo}}d(c_X(i),c_Y(j))^p + \lambda_{\mathrm{deg}}|\overline d_X(i)-\overline d_Y(j)|^p + \lambda_{\mathrm{ov}}|r_X(i)-r_Y(j)|^p,
\]
where $\lambda_{\mathrm{geo}},\lambda_{\mathrm{deg}},\lambda_{\mathrm{ov}}\ge 0$ are \emph{hyperparameters} (user-chosen weights). The first term compares landmark geometry, the second compares local graph connectivity, and the third compares local probabilistic overlap.

The probabilistic Ball Mapper graph $\mathsf G_X$ then determines a probability measure on the feature space given by
\[
\alpha_X = \sum_{i\in V_X}\nu_X(i)\delta_{\phi_X(i)}.
\]
Similarly,
\[
\alpha_Y =\sum_{j\in V_Y}\nu_Y(j)\delta_{\phi_Y(j)}.
\]
We compare these two measures using the $p$-Wasserstein distance \cite{santambrogio2015optimal}
\[
W_p(\alpha_X,\alpha_Y)^p =\min_{\pi\in\Gamma(\nu_X,\nu_Y)}\sum_{i\in V_X}\sum_{j\in V_Y} d_\Phi(\phi_X(i),\phi_Y(j))^p\pi_{ij}.
\]
The term $\pi_{ij}$ belongs to the set $\Gamma(\nu_X,\nu_Y)$ which consists of \emph{couplings} between the vertex distributions. In other words,
\[
\Gamma(\nu_X,\nu_Y) = \left\{\pi\in\mathbb R_+^{|V_X|\times |V_Y|} \mid \sum_{j\in V_Y}\pi_{ij}=\nu_X(i) \text{ and } \sum_{i\in V_X}\pi_{ij}=\nu_Y(j)\right\}.
\]
The entry $\pi_{ij}$ is the amount of mass transported from vertex $i$ of $\mathsf G_X$ to vertex $j$ of $\mathsf G_Y$. This formulation does not require $|V_X|=|V_Y|$. It also does not require a vertex ordering or a one-to-one correspondence between the two graphs.

Thus, comparing vertex sets of probabilistic Ball Mapper graphs is simply an optimal transport problem between the vertex measures $\alpha_X$ and $\alpha_Y$.

\subsection{Relational Measures}

The Wasserstein distance above compares vertices through their features. However, a graph is not only a collection of vertices. It also contains relational information between pairs of vertices. In Ball Mapper, this relational information may come from the binary adjacency matrix, the shortest path distance, or the soft overlap matrix.

To compare such relational structure without imposing a fixed vertex correspondence, we may use a Gromov--Wasserstein type term \cite{memoli2011gromov}. Let $R_X:V_X\times V_X\to\mathbb R$ and $R_Y:V_Y\times V_Y\to\mathbb R$ be pairwise relation functions. For example, $R_X(i,k)$ may be the shortest path distance between vertices $i$ and $k$, or it may be a quantity derived from the soft overlap matrix $Q_X$.

Given a coupling $\pi \in \Gamma(\nu_X,\nu_Y)$, the \emph{structural cost} associated with $\pi$ is
\[
\sum_{i,k\in V_X}\sum_{j,\ell\in V_Y}\left|R_X(i,k)-R_Y(j,\ell)\right|^p \pi_{ij}\pi_{k\ell}.
\]
This term is small when pairs of vertices in $\mathsf G_X$ are transported to pairs of vertices in $\mathsf G_Y$ with similar internal relations. A fused comparison combines the vertex-feature cost with the relational cost \cite{vayer2019optimal}.  For $\eta\in[0,1]$, we denote the resulting fused Gromov--Wasserstein-type discrepancy by $\operatorname{FGW}_{p,\eta}(\mathsf{G}_X,\mathsf{G}_Y)$.

When $\eta=0$, this reduces to the ordinary Wasserstein comparison of vertex sets. When $\eta = 1$, it compares only the internal relational structures. Intermediate values of $\eta$ balance the effect from both vertex features and graph structure.

In this work, the relation matrix $R_X$ may be chosen according to the aspect of the Ball Mapper graph we want to compare. Possible choices include the shortest path distance
\[
R_X(i,k)=\operatorname{dist}_{E_X}(i,k),
\]
in the Ball Mapper graph, or a soft-overlap dissimilarity derived from $Q_X$. For example, when $\nu_X(i)\nu_X(k)>0$, we may normalize the soft overlap by
\[
S_X(i,k) =\frac{Q_X(i,k)}{\sqrt{\nu_X(i)\nu_X(k)}}
\]
and use a dissimilarity such as $R_X(i,k) = 1 - S_X(i,k)$.

If two Ball Mapper graphs are built on the same ordered cover, then simpler entrywise comparisons are possible. For example, if $\mathcal C_\varepsilon = \{C_1,\dots,C_K\}$ is shared by both graphs, then the vertex distributions $\nu_X,\nu_Y\in\Delta^{K-1}$ may be compared directly, and the matrices $Q_X,Q_Y$ may be compared entrywise.

The transport formulation above contains this special case as a limiting or constrained situation. We may restrict attention to couplings that concentrate most of their mass near corresponding vertices, or, in the strongest case, to the identity coupling. Without such a prescribed correspondence, the Wasserstein or fused Gromov--Wasserstein formulation is more appropriate.

Entrywise distances, such as $|Q_X-Q_Y|_{\mathrm F}$, are useful only when the two graphs are defined on the same ordered vertex set. For independently fitted Ball Mapper graphs, the transport formulation is the natural replacement.

\subsection{Stability under Perturbations}

We now discuss how the probabilistic Ball Mapper graph changes when the data are perturbed. Let $X= \{x_1,\dots,x_n\}$ and $X'= \{x'_1,\dots,x'_n\}$ be finite subsets of a common metric space. We say that $X'$ is a \emph{$\delta$-perturbation} of $X$ if
\[
d(x_\ell,x'_\ell) \le \delta
\]
for every $\ell=1,\dots,n$.

First, suppose that both datasets are fitted using the same fixed cover
\[
\mathcal C_\varepsilon= \{C_1,\dots,C_K\}
\]
with the same landmarks. Then the vertex set is fixed, and the only quantities that change are the membership vectors $w(x_\ell)$ and $w(x'_\ell)$.

The support of a subordinate membership rule can change when a point crosses the boundary of a covering ball. To isolate this effect, we define the \emph{$\delta$-boundary band}
\[
B_\delta = \left\{x\in X \mid \min_{1\le i\le K} |d(x,c_i) - \varepsilon| \le \delta \right\}.
\]
This is the set of points lying within distance $\delta$ from the boundary of some covering ball. Let $b_\delta=\mu_X(B_\delta)$ be its mass.

We assume that the membership map is \emph{Lipschitz} on each region where the index set $I_x$ is fixed. That is, we assume that there exists $L_\kappa>0$ such that
\[
\|w(x)-w(y)\|_1 \le L_\kappa d(x,y)
\]
whenever $I_x = I_y$. For localized radial-basis membership rules, this assumption is natural away from boundary crossings. Indeed, on any region where the active set $I_x$ is fixed, the same balls receive positive mass, and only the distances $d(x,c_i)$ vary with $x$. 

If the radial profile $\rho$ is Lipschitz continuous on $[0,\varepsilon]$, then
\[
\|\rho(d(x,c_i))-\rho(d(y,c_i))\| \leq L_\rho \|d(x,c_i)-d(y,c_i)\| \leq L_\rho d(x,y),
\]
where the last inequality follows from the reverse triangle inequality. Moreover, if the normalizing denominator $Z_x$ is bounded below by a positive constant on this region, then the normalized weights $w_i(x)$ also vary at most linearly with $x$. Thus, away from changes in the support set $I_x$, localized radial basis assignments satisfy the Lipschitz continuity condition above.

If $x_\ell \notin B_\delta$, then the perturbation does not change the set of balls containing $x_\ell$. Hence, $I_{x_\ell}=I_{x'_\ell}$ and therefore
\[
|w(x_\ell)-w(x'_\ell)|_1 \le L_\kappa\delta.
\]
For points in $B_\delta$, the support may change. Since both $w(x_\ell)$ and $w(x'_\ell)$ are probability vectors, we always have the crude bound
\[
\|w(x_\ell)-w(x'_\ell)\|_1\le 2.
\]
It follows that the vertex distributions satisfy
\[
\|\nu_X-\nu_{X'}\|_1 \le L_\kappa\delta +2b_\delta.
\]
Indeed,
\[
\nu_X-\nu_{X'} =\sum_{\ell=1}^n\left(w(x_\ell)-w(x'_\ell)\right)\mu_X({x_\ell}),
\]
and the preceding pointwise estimates give the bound.

The soft overlap matrices satisfy a similar estimate. Since
\[
Q_X =\sum_{\ell=1}^nw(x_\ell)w(x_\ell)^\top\mu_X({x_\ell}),
\]
and similarly for $Q_{X'}$, we use
\[
\|uu^\top-vv^\top\|_{\mathrm F} \le 2\|u-v\|_1
\]
for probability vectors $u$ and $v$. Therefore,
\[
\|Q_X-Q_{X'}\|_{\mathrm F} \le 2L_\kappa\delta + 4b_\delta.
\]

Thus, in the fixed cover setting, the probabilistic quantities $\nu$ and $Q$ vary continuously away from boundary crossings, with an additional error term controlled by the amount of data mass near ball boundaries.

This should be contrasted with the binary Ball Mapper adjacency matrix. A single point crossing into or out of an overlap region can create or destroy an edge. Consequently, the hard adjacency matrix can change abruptly under small perturbations. The soft overlap matrix $Q$ records the amount of shared membership mass and is therefore better suited for stable quantitative comparison.

If the Ball Mapper cover is recomputed after perturbing the data, then an additional source of variation appears: the landmarks themselves may change. Let $N_X = \{c_X(i) \mid i\in V_X\}$ and $N_{X'}=\{c_{X'}(j) \mid j\in V_{X'}\}$ be the landmark sets of the two independently fitted covers. Their \emph{Hausdorff distance} is
\[
d_{\mathrm{Haus}}(N_X,N_{X'}) = \max\left\{\sup_{a\in N_X}\inf_{b\in N_{X'}}d(a,b), \sup_{b\in N_{X'}}\inf_{a\in N_X}d(a,b)\right\}.
\]
Let $r_\delta=d_{\mathrm{Haus}}(N_X,N_{X'})$. This term measures how much the landmark sets move when the data are perturbed and the cover is rebuilt.

Under additional regularity assumptions, this discussion suggests a stability estimate of the form
\[
\operatorname{FGW}_{p,\eta}(\mathsf G_X,\mathsf G_{X'}) \le C_1\delta + C_2b_\delta + C_3r_\delta.
\]
Here, the term $C_1\delta$ reflects smooth variation of the membership weights away from cover boundaries, $C_2b_\delta$ accounts for the mass of points whose support may change by crossing a ball boundary, and $C_3r_\delta$ accounts for changes in the landmark set when the cover is recomputed. This estimate should be interpreted as a stability principle rather than an unconditional theorem. A formal bound requires Lipschitz continuity assumptions on the membership rule, vertex features, and relation matrices.

\section{Conclusions}

We developed a probabilistic interpretation of Ball Mapper that keeps its metric-ball cover and nerve construction while replacing binary point-to-ball incidence by a probability distribution supported on the balls containing each point.  The resulting membership map is both a partition of unity subordinate to the cover and a Markov kernel from data points to cover elements. The uniform-on-support and localized radial-basis rules illustrate how this formulation can retain Ball Mapper locality while incorporating different degrees of geometric information.

The kernel induces two complementary cover-level objects. The pushforward $\nu$ records how the data distribution is allocated among the vertices, while $Q=W^{\mathsf T}D_\mu W$ records the joint law of two conditionally independent cover assignments generated from the same data point. In particular, $Q$ is symmetric, nonnegative, positive semidefinite, and has marginals $\nu$. This clarifies the distinction from the overlap-count weights already available in classical Ball Mapper: the contribution here is a conserved probability mass and a coherent marginal--joint representation, not the mere introduction of weighted edges.

These probability objects support two levels of graph comparison. When two datasets use a common ordered cover, their vertex distributions and overlap matrices can be compared directly. When covers are fitted independently and have different vertex sets, transport couplings provide a correspondence-free way to compare vertex mass and graph relations. Landmark geometry may be included only when the datasets share an ambient metric or feature space; in the general relational setting, the fused construction should be regarded as a transport-based discrepancy unless its relation functions satisfy additional conditions that establish metric properties.

The stability results have a similarly precise scope.  For a fixed cover, the derived bounds separate smooth variation of the membership weights from the mass $b_\delta$ of points close to ball boundaries. They explain both why $\nu$ and $Q$ can vary continuously away from boundary crossings and why abrupt changes remain possible when supports change. If the cover is recomputed, the landmarks themselves may move or change in number. The additional Hausdorff-type landmark term identifies this source of variation, but the corresponding fused transport estimate remains a stability principle requiring further regularity assumptions rather than a theorem established here.

Several limitations therefore remain.  A Gaussian profile restricted to $d(x,c_i)<\varepsilon$ need not vanish at the ball boundary, so subordination alone does not remove boundary discontinuities. Compactly supported profiles that vanish at $\varepsilon$ may improve regularity. A complete analysis of recomputed $\varepsilon$-nets must also account for landmark selection and for changes in graph relations, while statistical consistency, computational scaling, multiscale parameter selection, and empirical sensitivity to the membership profile require separate study. These questions, together with validation on substantive applications, form the next stage of the work. The present theory supplies the probability and transport framework on which those developments can be based.

\section*{Code availability}

An open source reference implementation of Probabilistic Ball Mapper is available at \url{https://github.com/jhnrckmnznrs/probabilistic_ball_mapper}. The repository provides implementations of the membership rules and graph comparison procedures developed in this manuscript. The present version of the manuscript focuses on the mathematical framework. Application-specific analyses are outside its scope.

\printbibliography

\end{document}